\newtheorem{theorem}{Theorem}
\newtheorem{definition}{Definition}
\newtheorem{lemma}{Lemma}
\def\BibTeX{{\rm B\kern-.05em{\sc i\kern-.025em b}\kern-.08emT\kern-.1667em\lower.7ex\hbox{E}\kern-.125emX}}
\begin{document}

%
\title{A Lightweight Algorithm to Uncover Deep Relationships in Data Tables}

%

\author{Jin Cao}
\affiliation{%
  \institution{Nokia Bell Labs}
  }
\email{jin.cao@nokia-bell-labs.com}

\author{Yibo Zhao}
\affiliation{%
  \institution{Indeed Inc.}
}
\email{yibozhao@indeed.com}

\author{Linjun Zhang}
\affiliation{%
 \institution{University of Pennsylvania}
}
\email{linjunz@wharton.upenn.edu}

\author{Jason Li}
\affiliation{%
 \institution{Academy for Information Technology}
 \city{Scotch Plains}
 \state{NJ}
}
\email{jli1@ucvts.org}
%
\renewcommand{\shortauthors}{J. Cao et al.}

%
\begin{abstract}
Many data we collect today are in tabular form, with rows as records and columns as attributes associated with each record. Understanding the structural relationship in tabular data can greatly facilitate the data science process. 
Traditionally, much of this relational information is
stored in table schema and maintained by its creators, usually domain experts. In this paper, we develop automated methods to uncover deep relationships in a single data table without expert or domain knowledge. Our method can decompose a data table into layers of smaller tables, revealing its deep structure. The key to our approach is
a computationally lightweight forward addition algorithm that we developed to recursively
extract the {\em functional dependencies} between table columns that is scalable to tables with many columns.
With our solution, data scientists will be provided with automatically generated, data-driven insights when exploring new data sets.

\end{abstract}

%
%
\begin{CCSXML}
<ccs2012>
<concept>
<concept_id>10002951.10003227.10003351</concept_id>
<concept_desc>Information systems~Data mining</concept_desc>
<concept_significance>500</concept_significance>
</concept>
<concept>
<concept_id>10002950.10003648.10003688</concept_id>
<concept_desc>Mathematics of computing~Statistical paradigms</concept_desc>
<concept_significance>300</concept_significance>
</concept>
</ccs2012>
\end{CCSXML}

\ccsdesc[500]{Information systems~Data mining}
\ccsdesc[300]{Mathematics of computing~Statistical paradigms}

%
\keywords{Functional Dependency, Machine Learning, Random Permutation, Feature Engineering}

%

%
\maketitle

\newcommand{\cm}[1]{}
\section{Introduction}

Prior to machine learning activities, the data scientist has to develop an understanding of the data. For a relational database, the contextual and structural information are expressed in data schema,  which is often generated by the domain experts at the time of data creation. However, schema maintenance 
on big or evolving data is non-trivial,
and often requires lots of manual effort and domain knowledge. 
In this paper, 
we aim to develop automated  profiling methods for tabular data that allow for a fast and accurate understanding of its structural relationships between data columns. Our work is motivated by {\em Automated Machine Learning} \cite{Matthias2015}, with the goal to develop automated procedures 
to improve efficiency of machine learning for non-experts.

\cm{
Track changes is off
Everyone
You
Guests
zlj11112222: N is defined in Section 2. Shall we define N again here?
Feb 18, 2019 8:06 PM

Hit Enter to reply
zlj11112222: Shall we briefly mention the meaning of $C\rightarrow_{\epsilon} Y$?
Feb 18, 2019 8:07 PM

Hit Enter to reply
Current file
Overview
16
 device + time are ID, then you can plot the other metric against time, aggregation w.r.t to the ID fields, slice and dice the data to form more compact summaries
 }

\cm{
Automated Machine Learning (AutoML) has become a popular topic over the past years \cite{Matthias2015}. Conventional machine learning crucially relies on domain experts, and it requires great human efforts to perform data exploration, problem formulation, feature extraction, hyper-parameter tuning, and model deployment. As the rapid growth of machine learning applications and the complexity of these tasks is often beyond non-experts, people would like an automated machine learning procedure that can be easily applied without experts knowledge. 
}

\cm{
Converting massive amount of collected data into information and actionable insights often takes a long time. 
This is because 
conventional machine learning crucially relies on domain experts to perform a range of machine learning tasks such as 
data exploration, problem formulation, feature extraction, hyper-parameter tuning and model deployment. As the rapid growth of machine learning applications and the complexity of these tasks is often beyond non-experts, there is an increasing demand for developing automated machine learning procedures that can be easily applied without experts knowledge. 
As a result, 
{\em Automated Machine Learning}, or  AutoML for short, has become an increasingly popular research topic over the past years
\cite{Matthias2015}.  
}

\cm{
Despite all the advances in machine learning and AI techniques and availability of impressive computing power, the data science process takes a long time as the process often requires manual effort from data scientists to prepare the data for machine learning. 

We are building a suite of efficient algorithms to automatically extract data types, patterns, context and structural relationships from any type of data to accelerate the data science project timeline. With our solutions, data scientists and domain experts will be provided with automatically-generated, data-driven insights when exploring new data sets. Such insights will help to automatically generate features from raw data. 

Conventional machine learning crucially relies on domain experts to perform a range of machine learning tasks such as 
data exploration, problem formulation, feature extraction, hyper-parameter tuning and model deployment. As the rapid growth of machine learning applications and the complexity of these tasks is often beyond non-experts, there is an increasing demand for developing automated machine learning procedures that can be easily applied without experts knowledge. 
As a result, 
{\em Automated Machine Learning}, or  AutoML for short, has become an increasingly popular research topic over the past years
\cite{Matthias2015}.  
}

\cm{
as the availability of big data becomes widespread and the human machine learning experts remain scarce \cite{Matthias2015}.
}

\cm{
In AutoML, data profiling, the systematic analysis of the content of a dataset, is one of the most difficult and time-consuming parts due to lack of understanding of the data. Data from real world problems can be very dirty. Incompleteness, noise and inconsistency may comprise data analytics accuracy. Conventional data profiling methods are knowledge-based or rely heavily on domain experts \cite{chu:2013}\cite{Chu2015} (\textcolor{blue}{I reworded this paragraph, please check if these references still apply.}). We aim to design automated data profiling methods that allow for a fast and accurate understanding of the characteristics of a given dataset, such as data ID fields, statistical information of data columns, and dependencies between data columns. 
}



\cm{
In this paper, 
we aim to design automated data profiling methods for tabular data that allow for a fast and accurate understanding of its structure  dependencies between data columns. Our work falls into the area of {\em Automated Machine Learning} \cite{Matthias2015}, with the goal to develop automated machine learning procedures that can be easily applied without expert knowledge. 
}


The key to our structural relationship discovery is efficient extraction of important {\em functional dependency} between columns of a data table $\mathcal{T}$ with $N$ columns. In simple words, a column combination $C$ {\em functionally determines} a column $Y$, or $C\rightarrow Y$, if and only if
 each set of $C$ values is associated with precisely one $Y$ value. 
 The dependency is {\em minimal} if it no longer holds after removal of any columns in $C$. Existing approaches
 on uncovering functional dependencies 
 focus on extracting {\em all} solutions of $C$ for a given column $Y$
 in a data table.
 Unlike these methods, our approach
recursively extracts the most important structural relations
for the purpose of data profiling and understanding.
Specifically, we discover those minimal functional dependencies that tend to contain fewer columns.

Another key to our approach is the recursive strategy. We start  
the discovery from the combination of all columns, i.e., the row index, as the initial $Y$. We find descendants of $Y$ defined as those columns that are functionally dependent on $Y$. From its descendants, we attempt to  find a small set of columns $C$ such that 
$C\rightarrow Y$. If such $C$ exists, we then recursively apply the same process to each column (or combination of columns) in $C$ until failure. 
The outcome of the whole procedure is a skeleton of a schema tree with nodes representing columns and a split representing a minimal functional dependency between a parent and its children.
After the skeleton is extracted, we then attach the remaining columns not in the skeleton to one of the skeleton nodes as descendants to complete the tree construction. With this schema tree, a data table is decomposed into layers of smaller tables, thus 
revealing the deep dependencies within.
Our main contributions include:



\cm{
The main challenge of the ID fields identification is computational complexity. The naive Brute force search (count $r_A$ for any subset $A\subset \{1,2,...,p\}$ until $r_A = n$) is time consuming with exponential computational complexity $O(2^p)$. To address this issue, we proposed an efficient algorithm. In addition, the applications of ID field identification need to be explored and validated using real-world problems. 
}
\begin{itemize}
\item A forward addition (FA) algorithm that identifies a solution of column combination $C$ that
functionally determines a given column $Y$ with only $O(\log N D)$ distinct count evaluations for a size constraint $D$, i.e., the number of columns in solution $C$ is at most $D$.  
\item Success probability analysis for finding a solution $C$ with size constraint $D$ in one run of FA algorithm.
\item Algorithms for finding the best solutions from multiple runs of FA with size and error constraints.
\item A recursive process to build schema tree giving concise representations of data structure and dependency. 
\item An example of how to utilize the discovered schema tree for feature engineering.
\end{itemize}

Before we proceed, we discuss how our work here on structural relationship discovery based on functional dependency is related to the widely studied statistical (or probabilistic) dependency between columns. Notice that
functional dependency between columns is {\em deterministic} while the 
statistical dependency is {\em probabilistic}. In this view, statistical dependency can be viewed as a generalization of the functional dependency. 
Many methods has been developed to detect strong statistical dependencies between columns in a data table, from using simple metrics such as Pearson correlation \cite{pearson1895note}, mutual information \cite{journals/bstj/Shannon48} to more complex methods such as graphical models \cite{koller2009probabilistic}.  
 

\section{Background Review} \label{sec:backgroud}
In this section, we first provide necessary background and then discuss related work. We start with notations and definitions. Let ${\mathcal T}$ be a data table with $N$ columns and $R$ rows. For the work presented in the paper, we assume there is no duplicated rows in ${\mathcal T}$.
 Let $C$ and $Y$ be two sets of column combinations.
\begin{definition}
 For a column combination $C$, define
$r(C) = \mbox{distinct row count with columns $C$ from table} \ {\mathcal T} $, 
and $|C|$ as the number of elements (columns) in $C$. 
Note that $r(C)$ is non-decreasing with respect to column additions to $C$, and its maximum value is $R$.
\end{definition}

\begin{definition}
\label{def:minimal-dependence}
$C$ functionally determines $Y$, or $C\rightarrow Y$, if and only if  $r(C) = r(C\cup Y)$, 
i.e.,  each $C$ value is associated with precisely one $Y$ value. It is 
a \textbf{minimal functional dependency} if removal of any column from $C$ breaks the dependency.
\end{definition}
\begin{definition}
Define $Descendant(C)$ as the set of all columns $Y$ such that $C\rightarrow Y$. 
 \end{definition}

\cm{
\begin{definition}
\label{def:minimal-dependence-epsilon}
$C$ $\epsilon$-functionally determines $Y$, or $C\xrightarrow{\epsilon} Y$, iff  $r(C\cup Y)(1-\epsilon) \leq r(C)$ (i.e., $error(C,Y)\leq \epsilon$). It is 
a \textbf{$\epsilon$-minimal functional dependency} if removal of any column from $C$ breaks the dependency.

\end{definition}
}

\begin{definition}
\label{def:minimal-unique}
$C$ is called \textbf{minimal unique} if 
$r(C)=R$, and if removal of any column from $C$ breaks the equality. 
Notice this is a special case  of Definition 2  when $Y$ is the set of all columns.
\end{definition}

\cm{
\begin{definition}
A functional dependency $X\rightarrow Y$ is a minimal functional dependency if removal of any attribute $A$ from $X$ means that the dependency does not hold any more; that is, for any attribute $A \in X, (X – \{A\})$ does not functionally determine Y. A functional dependency $X \rightarrow Y$ is a partial dependency if some attribute $A \in X$ can be removed from $X$ and the dependency still holds; that is, for some $A \in  X, (X – \{ A\}) \rightarrow Y$.
\end{definition}

\begin{definition}
 ((Non-)Unique Column Combination). A column
combination K ⊆ S is a unique for R, iff ∀ri
,rj ∈ R, i , j : ri[K] ,
rj[K]. All other column combinations are non-unique.
\end{definition}

relation R with schema S (the set of n attributes) a unique column combination is a set of one or more attributes whose projection has only unique rows. In turn, a non-unique column combination has at least one duplicate row

 A functional dependency $X\rightarrow Y$ is a minimal functional dependency if removal of any attribute $A$ from $X$ means that the dependency does not hold any more; that is, for any attribute $A \in X, (X – \{A\})$ does not functionally determine Y. A functional dependency $X \rightarrow Y$ is a partial dependency if some attribute $A \in X$ can be removed from $X$ and the dependency still holds; that is, for some $A \in  X, (X – \{ A\}) \rightarrow Y$.
 }
 
 
 
For a column combination $C$, it is easy to obtain
$Descendant(C)$ by simply checking the equality $r(C)=r(C\cup Y)$ for any column $Y$. The inverse problem is much harder: for a given $Y$, find $C$ such that $C\rightarrow Y$ is a minimal functional dependency. Many algorithms have been proposed in the literature, which can be classified into 'column-wise' algorithms (\cite{Huhtala:1999,he2013}), 'row-wise' algorithms (\cite{lopes2000efficient,wyss2001fastfds,y2006}), or hybrid methods (\cite{Abedjan:2011,papenbrock2017}). 
These methods focus on finding {\em all} solution sets of $C$ since finding one solution is considered as a simple problem. 
For example,
a commonly used approach for finding one solution of minimal unique  is what we call the 'Backward Elimination' (BE) algorithm. It starts from the complete set (all columns), then it recursively attempts to eliminate a column from the set to maintain the distinct count $R$. A solution is obtained when such operation is no longer possible. It is however a much harder problem to find {\em all} solution sets of minimal functional dependency, 
as in the worst case, the number of possible solutions is  exponential in $N$. Therefore, earlier work seek algorithms that can find all solutions with computational complexities that are polynomial with respect to the size of the solution set. In contrast to these methods, we do not derive all solutions of $C$ that satisfy the functional dependency but instead focus on finding important functional relations, especially those $C$ that contain fewer number of columns. By extracting these functional dependencies recursively, they can be succinctly expressed in the form of a tree that can be very useful for data understanding and greatly facilitate downstream machine learning. 


The rest of the paper is organized as follows. In Section 3, we present our FA algorithm to find one solution for
near functional dependency and show that it favors  short solutions with random column permutations. Section 4 presents algorithms to find  the best solutions with multiple runs of the FA algorithm. In Section 5,  we present an algorithm to build a scheme tree for a data table using recursive functional dependency discovery. In Section 6, an application of the discovered schema tree for feature engineering is demonstrated. We conclude in Section 7.

 

\cm{
Limited research has been done to identify ID fields. As mentioned, a brute-force approach requires $O(2^p)$ computational complexity. To reduce the complexity, Lacroix et al.~\cite{Lacroix:2003} limited the search space to only a small number of adjacent columns. By doing so, their approach may only be able to find local ID fields but omit the ID fields that are composed of columns far from each other in the data table. In database area, the ID fields is considered as composite keys. Different methods were proposed to find the composite keys
\cite{Huhtala:1999,Gunopulos:2003,Abedjan:2011,y2006,zhang2010,Heise:2013}. For example, Sismanis et al.~\cite{y2006} proposed the \textit{Gordian} method to find all composite keys in a dataset. They formulated 
the problem as a cube computation that corresponds to the computation of the entity counts of all possible column projections. This method involves multiple procedures and is unnecessarily complex with high computational complexity (Theorem 1 in \cite{y2006}). In addition, their method is restricted to datasets that satisfying their additional assumptions. In mathematics, $r_C$ can be viewed as a \textit{submodular function} \cite{Heise:2013}. The submodular set function optimization is to find a set $C$ that maximize $r_C$. However, the assumptions of the submodular set function optimization problem are quite general. There is no optimal solution but an approximate solution for the submodular optimization problem, which is not a satisfactory solution for ID fields identification.

On the other hand, traditional methods for insights extraction mainly rely on text analytics according to the context and contents in the corresponding columns~\cite{Aggarwal}. These methods mostly focus on the information within one single column and fail to establish associations to other columns. The ID fields can provide valuable correlation information among the columns. Hence combing the ID fields with text analytics, we are able to extract more accurate insights from datasets.
}
\begin{figure*}[htpb]
\centering
\includegraphics[width=5.3in, angle=270]{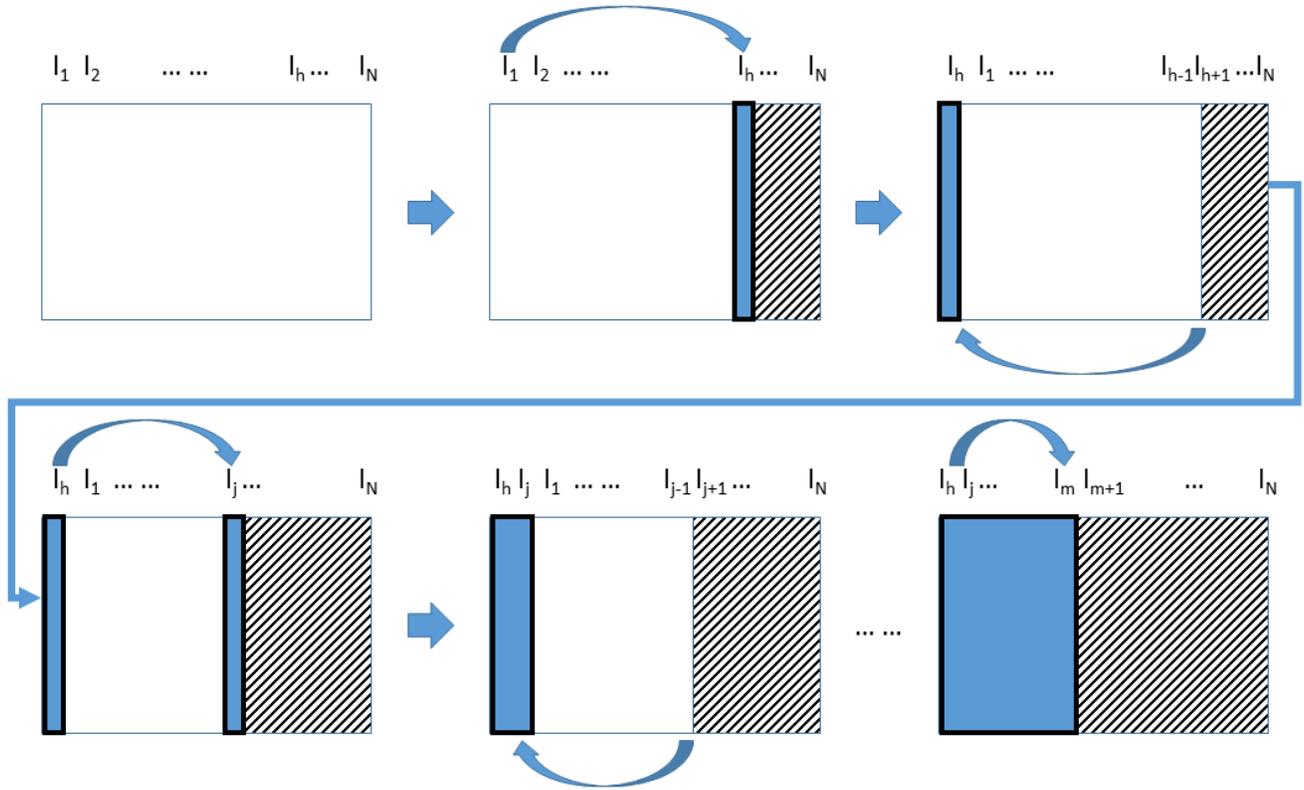}
\caption{Illustration of Algorithm 1.  The rectangle represents the data matrix with columns $l_1,\ldots,l_N$. Panel 1: data matrix with columns $l_1,\ldots,l_N$; Panel 2 illustrates the result of the first iteration (line 6 with $i=1$), where column $l_h$ (blue) is the first selected column. Columns in the shaded area are then excluded from further consideration. Panel 3 shows column $l_h$ is moved to front with the rest remaining in the same order (line 7). Panel 4 and 5  shows the result of the second iteration. Panel 6 shows the final solution set consisting of the blue columns.}
\label{fsa}
\vspace*{-0.2in}
\end{figure*}

 
 
\section{Forward Addition (FA) Algorithm for One Solution}
In this section, we present a lightweight forward addition (FA) algorithm to obtain a single solution of $C$ such that $C\rightarrow Y$ for a column combination $Y$ with a size constraint $D$, i.e., $|C|\leq D$.
We first illustrate our method for the case of finding minimal uniques and show
FA is much faster for a table with large number of columns $N$ comparing to  BE Algorithm (Section 2). 
We discuss the probabilistic version of FA using random column permutations and show that this probabilistic variation favors shorter solutions of $C$ (i.e. those with fewer columns). 
Finally we extend our algorithm to the general case of (approximate) functional dependency. 

\subsection{FA Algorithm}
We first present FA algorithm for deriving one solution of minimal unique $C$ (Definition \ref{def:minimal-unique}), with a size (number of columns) no bigger than $D$.
Given a sequence of column indices $L_0$
which is a permutation of $\{1,2,\ldots,N\}$, 
FA Algorithm attempts to find $C$ in at most $D$ sweeps of $L_0$. We start with an empty set $C$ and in each sweep, one element of $L_0$ is added to $C$ until $r(C) = R$. 
The algorithm is presented both in pictorial (Figure~\ref{fsa}) and pseudo code form (Algorithm~\ref{alg:fsa}). 
A detailed description is as follows.

\begin{algorithm}[htb]
\caption{FA Algorithm for Finding a minimal unique with size $\leq D$ for
Table ${\mathcal T}$ }
\label{alg:fsa}
\cm{
\textbf{Input}: Table ${\cal T}$ and size parameter $D$\\
\textbf{Output}: {a minimal unique $C$ with size less than $D$}\\
}
\begin{flushleft}
 \textbf{ Input}: $L_0$,
 a permutation of $\{1,2,\ldots,N\}$
\begin{algorithmic}[1]
\STATE Initialize:
$C\leftarrow\emptyset$, $d\leftarrow 0$, $L\leftarrow L_0$
\STATE \textbf{while} {$r(C) < R$ and $d<D$} 
    \STATE \hspace{0.15in} $T\leftarrow C$
	\STATE \hspace{0.15in} \textbf{for} {$i=1$ to $|L|$}
		\STATE \hspace{0.3in} $T \leftarrow (T, l_i)$, $l_i \doteq i$th element of $L$
		\STATE \hspace{0.3in} \textbf{if} {$r(T) = R$} \textbf{then}
			\STATE \hspace{0.5in} $C\leftarrow (C, l_i)$, 
			 $L\leftarrow (l_1, l_2,..., l_{i-1})$, $d=d+1$ \\ 
            \STATE \hspace{0.5in} \textbf{break} (go to line 2)
\STATE \textbf{if} {$r(C)=R$} return $C$
       \textbf{else } return $\emptyset$
\end{algorithmic}
\end{flushleft}
\end{algorithm}

First, initialize $C=\emptyset$, $d=0$
and $L=L_0$ (line 1 and the first panel). Next, we create a temporary set of columns $T$ that is identical to $C$ (line 3). We now add indices in $L$ one by one to $T$ until for the first time the distinct row count for column combination $T$ reaches $R$ (line 4-6 and panel 2). Suppose this is the $h$th element in $L$.  From here, we can conclude that $\{l_1,l_2,\ldots,l_h\}$ must contain a solution set of minimal unique, and column $l_h$ is required in the solution, so we add $l_h$
to $C$ and ignore the remaining columns after the first $h$ columns in $L$
(line 7-8 and panel 3 in the figure where the shaded columns are to be discarded). This completes one sweep of $L$. With an updated $C$, we go back to Step 2 and repeat this process at most $D$ times where in each time we find one column in $L$ to add to solution set $C$  until either $r(C) = R$ or we exceed the size limit $D$ (line 2 in algorithm, and 4th and 5th panels of Figure 1 shows the second iteration of this). If $r(C)=R$, then $C$ is a  minimal unique with $|C|\leq D$. Otherwise we failed to find a solution (line 9 or panel 6). 

It is helpful to understand how FA algorithm works in the case of multiple minimal uniques using illustrative examples.
Let the initial $L$ be $L_0=(1,2,\ldots,N)$. Assume first
$\{1,4,6\}$ is the only minimal unique. Then FA algorithm would find column 6 first, then column 4, and finally column 1 to add to $C$ in three sweeps of $L$ (where each time $L$ shrinks by discarding the elements after the selected one (shaded columns in Figure 1)). Now suppose there are two solutions: $C_1=\{1, 4, 6\}$ and $C_2=\{1,5,7\}$. As the largest numbers in $C_1$ and $C_2$ are 6 and 7 respectively, and 6 is less than 7,  FA algorithm would return $C_1$ as the solution since the search reaches $C_1$ first (line 4 and 5). Suppose the two solutions are: $C_1=\{1,4,6\}, C_2=\{2,5,6\}$. In this case
the largest number in each solution set ties (both are 6). Then after 6 is added to $C$,
since $4$ from $C_1$ is less than $5$ from $C_2$, this becomes the tiebreaker with the addition of the 2nd column index, which would return $C_1$ as the solution.  
In general, FA would return with a solution whose largest column index in the permuted table is the smallest among all solutions. If there is a tie, the tie-breaker would be determined by next additions in the same manner.



\subsection{Computational Complexity} 
During each sweep, column indices in $L$ are incrementally added to a  temporary set $T$ until the distinct row count reaches $R$ (line 5-6 of Algorithm 1). Since the distinct row count with respect to column addition is strictly increasing until it hits $R$, if the number of columns $N$ is large,
we can use binary search to locate this column  with
 $O(\log N)$ distinct count evaluations. Therefore, with $D$ sweeps, the total distinct count evaluations is $O(D\log N)$. 
 In comparison, 
  the required distinct count evaluations  for BE algorithm (Section \ref{sec:backgroud})
is $O(N)$ for the case of $D\ll N$, which implies FA is less 
expensive than BE for a table with a large number of columns $N$.


 
\begin{theorem}
One run of Forward Addition (FA) algorithm requires $O(D\log N)$ distinct count evaluations while one run of Backward Elimination (BE) algorithm requires $O(N)$ distinct count evaluations. 
\end{theorem}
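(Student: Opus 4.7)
The plan is to bound the two algorithms separately, using in each case the simplest counting argument one can give: count the outer passes, multiply by the per-pass cost.

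For the FA side, I would first observe that the outer \textbf{while} loop of Algorithm~\ref{alg:fsa} is executed at most $D$ times, since each successful iteration either increments $d$ by $1$ (adding a new column to $C$) or terminates the loop on the $d < D$ check. Next I would isolate the inner \textbf{for} loop as the only source of distinct-count evaluations in a sweep, and argue that the naive $|L|$ evaluations can be replaced by $O(\log N)$. The key monotonicity property is that for a fixed $C$ and a fixed ordering $l_1,l_2,\dots,l_{|L|}$ of $L$, the quantity $r\bigl(C\cup\{l_1,\dots,l_i\}\bigr)$ is non-decreasing in $i$ and is capped at $R$. Therefore the predicate ``$r(C\cup\{l_1,\dots,l_i\})=R$'' is monotone in $i$, so locating the smallest such $i$ is a classical binary-search task on a sorted sequence of length $|L|\le N$, costing $O(\log N)$ distinct-count calls. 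Multiplying the two bounds yields $O(D\log N)$.

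For the BE side, I would recall the Backward Elimination procedure described in Section~\ref{sec:backgroud}: initialize $C$ to the full set of $N$ columns and scan through them once, tentatively removing each column $c$ and retaining the removal if and only if $r(C\setminus\{c\})=R$. Each tentative removal uses a single distinct-count evaluation, and there are exactly $N$ columns to test in one pass, giving the $O(N)$ bound. I would also note, for the comparison to be honest, that unlike FA the BE inner test is not a monotone predicate over a linear sequence (removing one column does not make it easier or harder to remove another), so binary search does not apply and the $O(N)$ bound is tight for the natural implementation.

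The main obstacle, and the only step requiring any care, is the monotonicity justification for binary search inside FA. I would state it as a one-line lemma: for any column set $C$ and any columns $l_1,\dots,l_k$, adding more columns can only split equivalence classes of rows, never merge them, so $r(C\cup S)$ is monotone under $S$-inclusion and in particular along a prefix chain. Once that is in hand, the rest is bookkeeping: a product of the outer-loop bound $D$ and the per-sweep bound $O(\log N)$ for FA, and a single-pass count of $N$ for BE.
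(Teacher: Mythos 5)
Your proposal is correct and follows essentially the same route as the paper: the paper's argument is precisely that the distinct count is monotone along the prefix chain within a sweep (so binary search locates the first index reaching $R$ in $O(\log N)$ evaluations), there are at most $D$ sweeps, and BE costs one evaluation per tentative column removal over a single pass of $N$ columns. Your added remarks — the explicit monotonicity lemma justifying binary search, and the observation that BE's removal test is not a monotone predicate so binary search cannot be applied there — are sound refinements of the same argument rather than a different proof.
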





\subsection{FA algorithm with Random Column Permutation}
In Algorithm 1 line 1, $L_0$ is a permutation of the original column indices $\{1,2,\ldots,N\}$. 
We analyze the probabilistic properties of  FA algorithm where the index sequence 
$L_0$ is a random permutation of $\{1,2,\ldots,N\}$, especially when there are
multiple competing minimal uniques. 
Suppose there exists at least one minimal uniques with size no bigger than $D$. We also discuss the success probability that one run of the randomized FA algorithm
will find a desired solution.

Suppose a column combination $C$ is a minimal unique with $d$ columns. Given a random permutation $L_0$ of $\{1,2,\ldots,N\}$, let $m_1, m_2, \ldots,$  $m_d$ be the index of each of $C_i$ in $L_0$. For example, suppose $C=(1,4,6)$ is a minimal unique. With a random permutation $L_0$ of $(1,2,\ldots,N)$,
their corresponding indices
in $L_0$ becomes $(5,3,9)$ respectively. This means, the 5th column in $L_0$ is the 1st column in the original table, the 3rd column in $L_0$ is the 4th column in the original table and the 9th column in $L_0$ is the 6th column in the original table.Then $m_1=5, m_2=3, m_3=9$. For $d\ll N$ with a large $N$, it can be shown that $m_i/N, i=1,\ldots,N$ can be approximated by 
independent and identically distributed (i.i.d.) uniform random variables on $[0,1]$. Let $M=\max(m_1,\ldots,m_d)/N$, which is the largest index of the solution set after permutation, divided by $N$. Then $M\approx \max_{i=1}^d U_i$ where $U_i$ are i.i.d. uniform$[0,1]$. It is easy to show that $M_d$ has the following approximate distribution: $P(M\leq x) \approx x^d$. We summarize the result in the following lemma.

\begin{lemma}
Suppose we are given a table $\mathcal{T}$ and a column combination $C$ of $d$ indices, $C=(C_1,C_2,\ldots,C_d)$. 
After a random column permutation of table $\mathcal{T}$, 
let $m_i, i=1, \ldots, d,$ denote the new index of 
$C_i$ in the permuted table. Then as the total number of columns $N\rightarrow \infty$, $m_i/N, i=1,\ldots,d$ approximates independent $Uniform(0,1)$ random variables. Furthermore, let $M=\max_{i=1}^d m_i/N$. Then $P(M\leq x) \approx x^d$ for any $x\in (0,1)$ as $N\rightarrow \infty$.
\label{lem:M}
\end{lemma}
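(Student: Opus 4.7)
The plan is to reduce the lemma to a standard sampling-without-replacement computation and then take $N\to\infty$. After a uniformly random permutation of the $N$ columns, the tuple $(m_1,\ldots,m_d)$ of new positions of $C_1,\ldots,C_d$ is distributed exactly as a uniformly chosen ordered $d$-subset (without replacement) of $\{1,2,\ldots,N\}$. This is the only probabilistic fact I need; everything else follows from counting and a limit.

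First I would verify the marginal claim. Each $m_i$ is uniform on $\{1,\ldots,N\}$, so for any fixed $x\in(0,1)$,
\begin{equation*}
P(m_i/N\leq x)=\lfloor xN\rfloor/N\to x,
\end{equation*}
i.e.\ $m_i/N\Rightarrow \mathrm{Uniform}(0,1)$. For (approximate) independence, the joint probability of the event $\{m_1/N\leq x_1,\ldots,m_d/N\leq x_d\}$ is a ratio of falling factorials that, for fixed $d$ and $N\to\infty$, differs from $\prod_{i=1}^d x_i$ by $O(1/N)$, because the gap between sampling with vs.\ without replacement for a fixed sample size $d$ vanishes at rate $O(d^2/N)$ (the probability of any pairwise collision).

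Next, for the max, I would compute $P(M\leq x)$ directly. The event $M\leq x$ is exactly the event that all $d$ positions lie in $\{1,2,\ldots,\lfloor xN\rfloor\}$, so by the hypergeometric formula,
\begin{equation*}
P(M\leq x)=\frac{\binom{\lfloor xN\rfloor}{d}}{\binom{N}{d}}=\prod_{j=0}^{d-1}\frac{\lfloor xN\rfloor - j}{N-j}.
\end{equation*}
Each factor tends to $x$ as $N\to\infty$ with $d$ fixed, hence $P(M\leq x)\to x^d$, which is the stated approximation. (This argument also recovers the independence claim, since $P(M\leq x)\to x^d$ matches the distribution of the max of $d$ i.i.d.\ $\mathrm{Uniform}(0,1)$ variables.)

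I do not expect a significant obstacle: the only subtlety is being explicit about the regime $d\ll N$ so that the without-replacement corrections are negligible, and handling the floor $\lfloor xN\rfloor$ (which contributes only an $O(1/N)$ error, absorbed in the ``$\approx$''). Everything else is elementary counting.
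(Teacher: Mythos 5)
Your proposal is correct and follows essentially the same route as the paper, which only sketches the argument by asserting that $m_i/N$ are approximately i.i.d.\ $\mathrm{Uniform}(0,1)$ for $d\ll N$ and then taking the max. Your explicit computation $P(M\leq x)=\binom{\lfloor xN\rfloor}{d}/\binom{N}{d}\to x^d$ is a clean rigorization of that sketch and correctly identifies the $d\ll N$ regime as the only point needing care.
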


Now suppose there are $K$ minimal uniques, $C_1,C_2,\ldots,C_K,$ with length $d_1, d_2, \ldots, d_K$. Let
\begin{eqnarray*}
M_k \doteq \hspace*{-0.2in}&& \mbox{
largest index of $C_k$ in the randomly column } \\
&& \mbox{permuted table}, k=1, \ldots, K
\end{eqnarray*}
As we show before, if there is no ties in $\{M_k, k=1,\ldots,K\}$, FA will return the solution $C_j$ whose $M_j$ is the smallest. For the special instance of two minimal uniques with length $d_1,d_2$ and no-overlapping elements, simple calculations by integration show that 
\begin{eqnarray}
P(M_1 < M_2) \approx \frac{d_2}{d_1+d_2}.
\end{eqnarray}
Therefore, if $d_1<d_2$, then $P(M_1<M_2)> 1/2$, and the probability goes to 1 as $d_2$ becomes increasingly larger than $d_1$.
This implies that FA would {\em probabilistically} favor the shorter solutions. It is easy to extend the conclusion to the case when $C_1, C_2$ that share common elements by separating out the common elements from the  non-overlapping elements and use similar integration calculations. We have the following lemma.

\begin{lemma}
Let $C_1,C_2$ be two minimal uniques with length $d_1$ and $d_2$ respectively. If $d_1 < d_2$ then 
\[
P(M_1 < M_2) > 1/2,
\]
where $M_1,M_2$ is the maximal index value defined in Lemma \ref{lem:M}. This implies that if $C_1$ and $C_2$ are the only two minimum uniques, FA algorithm with random column permutation would favor $C_1$ as the returned solution as the probability of returning $C_1$ is higher than that of $C_2$.
\end{lemma}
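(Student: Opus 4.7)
My plan is to transport the question to the uniform order statistics setting via Lemma \ref{lem:M} and then read off $P(M_1 < M_2)$ as a ratio by exchangeability. First, in the $N \to \infty$ regime of Lemma \ref{lem:M}, the normalized indices $m_i/N$ of the $|C_1 \cup C_2|$ distinct columns occurring in $C_1 \cup C_2$ behave as independent uniform$(0,1)$ random variables $\{U_i\}$, so that $M_1 = \max_{i \in C_1} U_i$ and $M_2 = \max_{i \in C_2} U_i$.

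Next I would decompose $C_1$ and $C_2$ into the three disjoint pieces $S = C_1 \cap C_2$, $A = C_1 \setminus C_2$, $B = C_2 \setminus C_1$ of sizes $s$, $a = d_1 - s$, and $b = d_2 - s$, and denote by $M_A, M_B, M_S$ the corresponding independent block maxima. A short case analysis on the location of the overall maximum $M^{*} = \max(M_A, M_B, M_S)$ then gives: if $M^{*}$ is realized in $A$ we get $M_1 > M_2$, if in $S$ we get $M_1 = M_2 = M_S$, and only when $M^{*}$ is realized in $B$ do we get $M_1 < M_2$. By exchangeability of the underlying $a + b + s$ i.i.d. uniforms,
\[
P(M_1 < M_2) \;=\; \frac{b}{a + b + s} \;=\; \frac{d_2 - s}{d_1 + d_2 - s}.
\]
Setting $s = 0$ recovers the non-overlapping formula $d_2/(d_1 + d_2)$ already derived in the excerpt, which serves as a sanity check.

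The final step is to conclude $P(M_1 < M_2) > 1/2$ from the hypothesis $d_1 < d_2$ alone. The closed form above rephrases this as $d_2 - d_1 > s$, and I expect this to be the main obstacle: the raw inequality $d_1 < d_2$ does not by itself control the overlap $s = |C_1 \cap C_2|$. To close the gap I would exploit the fact that $C_1$ and $C_2$ are both \emph{minimal} uniques, so neither is contained in the other and every column of each is indispensable for uniqueness; the goal is to extract from this a structural bound of the form $s \le d_2 - d_1 - 1$, or equivalently to show that any larger overlap would contradict minimality of the longer solution. If such a bound is not obtainable in full generality, the natural fallback is to append the mild assumption $|C_1 \cap C_2| < d_2 - d_1$ to the statement and note that the closed-form expression above is in any case sharp and still supports the qualitative message that the randomized FA algorithm probabilistically favors shorter solutions.
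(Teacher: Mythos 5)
Your derivation follows the same route as the paper --- reduce to maxima of i.i.d.\ uniforms via Lemma~\ref{lem:M} and compute $P(M_1<M_2)$ --- but you actually carry out the overlapping case that the paper only gestures at, and in doing so you expose a real problem. The paper computes $P(M_1<M_2)\approx d_2/(d_1+d_2)$ by integration for disjoint $C_1,C_2$ and then asserts the shared-column case is ``easy to extend\ldots by separating out the common elements,'' without recording the resulting formula. Your exchangeability argument (the overall maximum of the $a+b+s$ uniforms is equally likely to fall in any one of them) is a cleaner derivation of the same quantity and gives $P(M_1<M_2)=(d_2-s)/(d_1+d_2-s)$, which correctly specializes to the paper's $d_2/(d_1+d_2)$ at $s=0$. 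The issue you flag at the end is the decisive one: this ratio exceeds $1/2$ iff $s<d_2-d_1$, and that is not implied by $d_1<d_2$.

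Do not expect minimality to close this gap. Minimality only forces $s\le d_1-1$ (neither set contains the other); it gives no bound of the form $s\le d_2-d_1-1$. For example $C_1=\{x,y,z\}$, $C_2=\{x,y,u,v\}$ is an antichain with $d_1=3$, $d_2=4$, $s=2$, and it is a classical fact that every antichain on the column set is realizable as the family of minimal keys of some relation; for this configuration your formula gives $P(M_1<M_2)=2/5<1/2$, so the lemma's conclusion itself fails in the general overlapping case, not merely its proof. Your proposed fallback --- adding the hypothesis $|C_1\cap C_2|<d_2-d_1$, or restricting to non-overlapping solutions as the paper's Theorem~\ref{thm:shortest} explicitly does --- is the correct fix. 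One further caveat worth recording: when the overall maximum lies in $S$ you get $M_1=M_2$ (an event of probability $s/(d_1+d_2-s)$), and FA then resolves the tie by the next selected index, so $P(\text{FA returns } C_1)$ is not exactly $P(M_1<M_2)$; the lemma's final claim that FA ``favors $C_1$'' needs that tie-breaking round analyzed as well.
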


The above lemma can be generalized to multiple solutions, and to the case with arbitrary solution lengths (not necessarily $d\ll N$). However, the extension to the most general case when the solutions sets are overlapping may need careful treatment. 

\begin{theorem} \label{thm:shortest}
FA with random column permutation (i.e., $L_0$ is a random permutation of $\{1,2,\ldots,N\}$ tends to favor the shorter solutions of minimal unique. In particular, if $C_1,\ldots,C_K$ are the $K$ solution sets with length $d_1<d_2\leq d_3 \leq ... \leq d_K \leq D$ that are non-overlapping, then the probability that $C_1$ be the final returned solution is the highest.
\end{theorem}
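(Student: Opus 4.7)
The plan is to compute, for each solution $C_j$, the probability that FA returns $C_j$ by integrating against the density of the ``winning'' max-index variable $M_j$, and then to show pairwise that $P(\text{FA returns }C_1)>P(\text{FA returns }C_j)$ for every $j\geq 2$. Because the solutions are non-overlapping, Lemma~\ref{lem:M} gives that the $M_k$'s are approximately independent with $P(M_k\leq x)=x^{d_k}$ and density $d_k x^{d_k-1}$ on $[0,1]$. Since FA returns the solution with smallest $M_k$, conditioning on $M_j=x$ and using independence yields
\[
P(\text{FA returns }C_j)\;\approx\;\int_0^1 d_j\,x^{d_j-1}\prod_{k\neq j}(1-x^{d_k})\,dx.
\]

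Next, I would fix $j\geq 2$ and write the pairwise difference
\[
P(\text{FA returns }C_1)-P(\text{FA returns }C_j)=\int_0^1 G(x)\,B(x)\,dx,
\]
where
\[
G(x)=\prod_{k\neq 1,j}(1-x^{d_k}),\qquad
B(x)=d_1 x^{d_1-1}(1-x^{d_j})-d_j x^{d_j-1}(1-x^{d_1}).
\]
The factor $G(x)$ is nonnegative on $[0,1]$, so it suffices to show $B(x)>0$ on $(0,1)$ whenever $d_1<d_j$. This reduces the multi-solution comparison to an elementary pointwise inequality in $x$.

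To verify $B>0$, I would factor out $x^{d_1-1}$ and study the auxiliary function
\[
\phi(x)=d_1-d_j\,x^{d_j-d_1}+(d_j-d_1)\,x^{d_j},\qquad x\in(0,1].
\]
A direct computation gives
\[
\phi'(x)=d_j(d_j-d_1)\,x^{d_j-d_1-1}\bigl(x^{d_1}-1\bigr),
\]
which is strictly negative on $(0,1)$ since $d_j>d_1$ and $x^{d_1}<1$ there. Together with the boundary values $\phi(0^+)=d_1>0$ and $\phi(1)=0$, this forces $\phi(x)>0$ on $(0,1)$, hence $B(x)>0$ on $(0,1)$. Therefore the difference integral is strictly positive, yielding $P(\text{FA returns }C_1)>P(\text{FA returns }C_j)$ for every $j\geq 2$, which is the theorem.

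The main obstacle I anticipate is precisely the pointwise positivity of $B$: with only two solutions, $\int_0^1 B(x)\,dx=(d_j-d_1)/(d_1+d_j)$ already exceeds zero by direct integration, but in the multi-solution case one cannot simply pull the decreasing weight $G(x)$ outside, so one needs the sharper pointwise statement rather than just a sign of the integral. Identifying the monotonicity of $\phi$ through its clean factorization is the step that makes the argument go through; everything else (translating rankings of $M_k$ into the integral representation, and assembling pairwise comparisons) is a routine consequence of Lemma~\ref{lem:M} and the non-overlap assumption.
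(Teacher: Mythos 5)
Your argument is correct, and it follows the same basic route the paper takes --- model the maximal permuted indices $M_k$ as independent maxima of uniforms (Lemma~\ref{lem:M}), observe that FA returns the solution with the smallest $M_k$, and compare probabilities by integration. The difference is one of completeness: the paper only carries out the integration for $K=2$ (obtaining $P(M_1<M_2)\approx d_2/(d_1+d_2)$, its Equation~(1)) and then simply asserts that ``the above lemma can be generalized to multiple solutions.'' You supply the piece that assertion actually needs. As you correctly identify, for $K>2$ the two-solution computation is not enough, because the comparison of $P(\mathrm{FA\ returns\ }C_1)$ and $P(\mathrm{FA\ returns\ }C_j)$ carries the extra weight $G(x)=\prod_{k\neq 1,j}(1-x^{d_k})$ inside the integral, so a positive \emph{integral} of $B$ does not immediately give a positive integral of $GB$. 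Your pointwise positivity argument --- factoring $B(x)=x^{d_1-1}\phi(x)$, computing $\phi'(x)=d_j(d_j-d_1)x^{d_j-d_1-1}(x^{d_1}-1)<0$ on $(0,1)$, and using $\phi(1)=0$ to conclude $\phi>0$ on $(0,1)$ --- is exactly the right lemma, and I verified the algebra: $\phi(x)=d_1-d_jx^{d_j-d_1}+(d_j-d_1)x^{d_j}$ is indeed what remains after factoring, and its derivative factors as you claim. Two minor points worth making explicit if you write this up: (i) Lemma~\ref{lem:M} is stated for a single column combination, so you should note that joint asymptotic independence of all the $M_k$ uses the non-overlap hypothesis (all $\sum_k d_k$ positions are distinct, and sampling without replacement of $O(1)$ positions from $N$ is asymptotically i.i.d.\ uniform); and (ii) the reduction ``FA returns the $C_j$ minimizing $M_j$'' relies on ties among the $M_k$ having vanishing probability in this limit, which again uses non-overlap. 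Both are routine, and your proof is a genuine strengthening of what the paper records.
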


Suppose there exist solutions with a size no bigger than $D$ out of all $K$ minimal uniques. Let $Succ$ be the event that we  successfully finds a minimum unique with size constraint $D$ after one run of FA with random permutation. 
Then by Theoroem \ref{thm:shortest}, 
$P(Succ=1)$ is at least $1/K$.
Therefore, for a finite $K$ and an arbitrary small $\epsilon>0$, we can find the a max failure times parameter $F$ such that $(1-1/K)^F < \epsilon$. 
If there are $F$ consecutive failures, we can probabilistically declare that there is no minimal uniques with size less or equal to $D$. It is worthwhile to mention that the performance of BE algorithm is identical to FA in terms of success probabilities when the columns are randomly permuted.

\cm{
\begin{theorem}
Suppose there exist solutions with length less than $d$. Then 
the probability that we will succeed in finding a solution using FA with random permutation is greater than 1/2. Furthermore, the probability that we fail to find a solution with size less than $kd, k\geq 1$ is less than $1/(k+1)$, i.e., $P(R = \emptyset) < 1/(k+1)$. With independent runs that result in $F$ failures, the probability is less than $1/(k+1)^F$, i.e.,

\begin{eqnarray*}
& P(R_i=\emptyset, i=1,\dots,F \ | \ \mbox{A solution with length $d$ exists}) \\
 & <  1/(k+1)^F. 
\end{eqnarray*}
\end{theorem}
\begin{proof}
Since each $R_i$ are independent event, we shall just show that
$P(R_1=0)<1/2$. Notice that the event that $R_1=0$ implies that there is a longer solution 
\end{proof}
}



\subsection{Extension to Functional Dependency}
We generalize Algorithm 1 to the case of
functional dependence. The generalization also 
considers {\em approximate} functional dependence defined below. Again let $C,Y$ be two column combinations in Table ${\mathcal T}$. We consider two definitions of error measurement in order to define 
$\epsilon-$ approximate functional dependency: $C\xrightarrow{\epsilon} Y$.
\begin{definition}
Let $e(C\rightarrow Y)$ be the minimal fraction of rows to be removed
for $C\rightarrow Y$ to hold.
Then
$C\xrightarrow{\epsilon} Y$, if $e(C\rightarrow Y) \leq \epsilon$.
\end{definition}

\begin{definition}
Define $\tilde{e}(C\rightarrow Y) = 1-r(C)/r(C\cup Y)$,
then $C\xrightarrow{\epsilon} Y$ if $ \tilde{e}(C\rightarrow Y) \leq \epsilon.$
\end{definition}

It is easy to show the following lemma that states a monotone property of the error measures with column additions. 
\begin{lemma}
\label{lem:cadd}
$e(C\rightarrow Y)=0$ iff $\tilde{e}(C\rightarrow Y)=0$. For a given $Y$, $e(C,Y)$ is non-decreasing with column additions to $C$. $\tilde{e}(C,Y)$ is non-decreasing with column additions to $C$ when columns in the addition are chosen from $Descendant(Y)$. 
\end{lemma}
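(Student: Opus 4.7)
The plan is to address the three assertions in the lemma separately.

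For the equivalence $e(C\to Y)=0 \Leftrightarrow \tilde{e}(C\to Y)=0$, I would reduce both conditions to the single identity $r(C)=r(C\cup Y)$, which is precisely Definition~\ref{def:minimal-dependence}'s characterization of $C\to Y$. For $\tilde{e}$ this is immediate from the formula $\tilde{e}=1-r(C)/r(C\cup Y)$ together with the fact that $r(C)\leq r(C\cup Y)$ always. For $e$, the value $0$ means no rows need to be removed for $C\to Y$ to hold on the full table, which is exactly $r(C)=r(C\cup Y)$ again by Definition~\ref{def:minimal-dependence}. Chaining these two equivalences yields the iff.

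For the monotonicity of $e$ under column additions to $C$, I would work directly from the combinatorial definition of $e$ as the minimum fraction of rows whose removal makes $C\to Y$ hold. The strategy is to set up the comparison of the conflict structures — pairs of rows that agree on the columns of $C$ but disagree on $Y$ — before and after adjoining a column to $C$. A careful accounting of how these conflicts transform under the column addition, together with the monotonicity of the distinct-count function $r$ established in Definition~1, should yield the claimed monotonicity of $e$.

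For the monotonicity of $\tilde{e}$ under column additions from $Descendant(Y)$, the proof is essentially algebraic. The crucial observation is that for any $Z\in Descendant(Y)$ one has $Y\to Z$, hence $r(C\cup Y\cup Z)=r(C\cup Y)$, so the denominator of $\tilde{e}$ is invariant under adjoining $Z$. Paired with the basic inequality $r(C\cup Z)\geq r(C)$, one then substitutes both into $\tilde{e}=1-r(C)/r(C\cup Y)$ and compares before and after the addition of $Z$ to conclude.

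The main obstacle will be the $e$ piece. Unlike $\tilde{e}$, whose value is a fixed rational function of distinct-row counts and therefore reduces to a ratio comparison once the descendant hypothesis kills the denominator's movement, the quantity $e$ is defined as the optimum of a combinatorial row-removal problem. The descendant hypothesis that gives a clean invariance for $\tilde{e}$ has no direct analogue for $e$, so that part of the argument has to be built from the row-partition structure induced by $C$ versus $C\cup Y$, and carefully matched against the partitions induced by the augmented $C\cup Z$.
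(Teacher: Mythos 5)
The paper offers no proof of this lemma (it is dismissed with ``it is easy to show''), so there is nothing to match your proposal against; I can only judge the proposal on its merits. Your treatment of the iff is correct: both conditions reduce to $r(C)=r(C\cup Y)$. Your treatment of $\tilde{e}$ is also the right calculation: for $Z\in Descendant(Y)$ we have $Y\rightarrow Z$, hence $r(C\cup Z\cup Y)=r(C\cup Y)$, and $r(C\cup Z)\geq r(C)$. But notice what that calculation actually gives: $\tilde{e}(C\cup Z\rightarrow Y)=1-r(C\cup Z)/r(C\cup Y)\leq 1-r(C)/r(C\cup Y)=\tilde{e}(C\rightarrow Y)$, i.e.\ the error is \emph{non-increasing} under such additions, which is the opposite of the lemma's literal wording ``non-decreasing.'' The same holds for $e$. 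The wording is almost certainly a slip (copied from Definition~1, where $r(C)$ really is non-decreasing): Algorithm~2's stopping rule, the binary-search claim, and Algorithm~4's $ColAdd$ step all rely on the error \emph{falling} as columns are added. Your proposal says it will ``conclude'' the claimed monotonicity from inequalities that in fact prove the reverse inequality; you need to resolve that tension explicitly rather than paper over it.

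The second gap is the $e$ part, which you correctly identify as the hard piece but leave as a gesture (``a careful accounting of how these conflicts transform''). The concrete argument is not about pairs of conflicting rows but about the partition of rows into groups by their $C$-value: the optimal removal keeps, in each group, the rows carrying the plurality $Y$-value, so the number of retained rows is $\sum_i \max_y |G_i\cap\{Y=y\}|$. Adding a column to $C$ refines each $G_i$ into subgroups $G_{i,j}$, and
\[
\sum_j \max_y |G_{i,j}\cap\{Y=y\}| \;\geq\; \max_y \sum_j |G_{i,j}\cap\{Y=y\}| \;=\; \max_y |G_i\cap\{Y=y\}| ,
\]
so the retained count can only grow and $e$ can only shrink (again: non-increasing). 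Note this argument needs no descendant hypothesis, which is exactly why the lemma imposes that hypothesis only on the $\tilde{e}$ clause; your closing paragraph worries about the absence of a descendant-type invariance for $e$, but none is needed once you argue via refinement of the $C$-partition rather than via the ratio of distinct counts.
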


\cm{
\begin{definition}
\label{def:minimal-dependence-epsilon}
We define $C\xrightarrow{\epsilon} Y$ if and only if
$C\rightarrow Y$ holds after removing $\epsilon$ fraction of rows.  
$\epsilon$-functionally determines $Y$, or $C\xrightarrow{\epsilon} Y$, if and only if  $r(C\cup Y)(1-\epsilon) \leq r(C)$ (i.e., $error(C,Y)\leq \epsilon$). It is 
a \textbf{$\epsilon$-minimal functional dependency} if removal of any column from $C$ breaks the dependency.

\end{definition}
}
We use two error definitions here due to 
computational considerations. Error in Definition 6 is much easier to compute than Definition 5 (which can be computed using algorithms in \cite{Huhtala:1999}). Furthermore, as Lemma 1 indicates, both definitions coincide when the error is 0. Unfortunately, the error in Definition 6 lacks a monotone property in the general case with column additions, unless the columns are from $Descendant(Y)$. However, this special case is indeed what we use to build a schema tree in Section 5. In the following, Algorithm 2 extends FA algorithm to general functional dependence in a straightforward fashion, 
by replacing the check for minimal uniqueness with the check for 
functional dependency (line 2, 6 and 9). The key to this extension is the monotonic property of the errors in Lemma \ref{lem:cadd}. 


\begin{algorithm}[htb]
\caption{FA Algorithm for finding $C=FA(Y, \epsilon, D)$, such that $C\xrightarrow{\epsilon} Y$ with $|C|\leq D$
for a column combination $Y$ 
}
\label{alg:fsa2}
\textbf{Input}: {$L_0$, a sequence of column indices after column permutation} \\
\cm{
\textbf{Output}: $FA(Y, \epsilon, D)$
}
\begin{algorithmic}[1]
\STATE Initialize:
$C\leftarrow C_0$, $d\leftarrow 0$, $L\leftarrow L_0 \setminus C_0$
\STATE \textbf{while} {$e(C\rightarrow Y) > \epsilon$ and
$d<D$}
\STATE \hspace{0.15in} $T\leftarrow C$
	\STATE \hspace{0.15in} \textbf{for} {$i=1$ to $|L|$}
		\STATE \hspace{0.3in} $T \leftarrow (T, l_i)$, $l_i \doteq i$th element of $L$
		\STATE \hspace{0.3in} 
		\textbf{if} {$e(T\rightarrow Y)\leq \epsilon$}
		\textbf{then}
			\STATE \hspace{0.45in} $C\leftarrow (C, l_i)$, 
			 $L\leftarrow (l_1, l_2,..., l_{i-1})$, $d=d+1$ \\ 
            \STATE \hspace{0.45in} \textbf{break} (go to line 2)
\STATE \textbf{if} {$e(C\rightarrow Y)\leq \epsilon$} return $C$
       \textbf{else } return $\emptyset$
\end{algorithmic}
\end{algorithm}

\cm{
We discuss the initialization of $C_0$ in the implementation (line 1).
An alternative to initializing $C_0$ as $\emptyset$ is to find columns that are essential in 
the functional dependency, i.e., the dependency breaks after the removal of any of these columns.
This is often more computationally advantageous
for running the algorithm multiple times as we will do in the next section. This in fact can be done as a re-initialization step after a solution $C$ is found and
we only need to check columns from $C_0$. 
The latter is often more computationally advantageous. 
}


\section{Best Solution with Size and Error Constraints}
For a set of columns $Y$ from a data table $\mathcal{T}$, in this section, we propose algorithms to find best solutions of $C$ such that $C\xrightarrow{\epsilon} Y$ 
with $|C|\leq D$ for a given error $\epsilon>0$ and size $D>0$. 
These algorithms essentially run FA algorithm (Algorithm 2) multiple times where in each iteration we attempt to improve upon previous solution until failure occurs. We consider the following two scenarios.

\subsection{Shortest Solution with Error Bound}
For a given $Y$ and error upper bound $\epsilon \geq 0$, 
Algorithm 3 find a solution $C$ with the fewest number of columns with at most $F$ consecutive failures such that $C\xrightarrow{\epsilon} Y$. 
The algorithm works with an input maximum number of consecutive failures $F$ after which we can declare no better solutions can be found. A call to Algorithm 2 is performed in each iteration (line 4) to attempt to find a solution with the given error bound $\epsilon$ and the current achievable
minimal size $\widetilde{D}$ (initialized to be $D$). After the first solution is found, the algorithm tries to find
a better solution with a smaller size (fewer columns) if successful (line 6 and 7). 
The algorithm terminates if the maximum number of attempts has been reached or the solution size reaches 1 (line 2).


\begin{algorithm}[htb]
\caption{Find $Shortest_1(Y, \epsilon, D)$, the shortest solution of $C$  with $|C|\leq D$, such that 
$C\xrightarrow{\epsilon} Y$ for  a column combination $Y$ 
}
\label{alg:smallest-y-1}
\textbf{Input}: Maximum number of consecutive failures $F$ (e.g. 10)
\begin{algorithmic}[1]
\cm{
\STATE Run FA algorithm at most $K$ times until one solution is found. If no solutions found, then return 'no solution'. Otherwise, continue and refer this solution as $C^{(1)}$. 
\STATE Find all essential columns from $C^{(1)}$. Assign it to $C_0$. 
\STATE For each column in $C^{(1)}$, find its equivalent columns, and remove these from the data. 
}
\STATE Initialize $\widetilde{D} \leftarrow D, k \leftarrow 0, C \leftarrow \emptyset$.
\STATE \textbf{while} {$k < F$ and $\widetilde{D} >1$} 
    \STATE \hspace*{0.15in} $L_0\leftarrow$ random permutation of $(1,2,\ldots, N)$
	\STATE \hspace*{0.15in} $S=FA(Y,\epsilon,\widetilde{D})$ using Alg. 2 with input $L_0$
	\STATE \hspace{0.15in} \textbf {if}  $S=\emptyset$ \textbf{then} $k=k+1$
	\STATE \hspace{0.15in} \textbf {if} 
	$(C=\emptyset$ and $S\neq \emptyset)$ 
	\textbf {or} ($C\neq \emptyset$ and $0<|S|<\widetilde{D}$)
	\STATE \hspace{0.15in} \textbf{then} 
$C\leftarrow S$; $\widetilde{D} \leftarrow |S|$;  $k\leftarrow 0$
		\STATE\hspace*{0.4in} \textbf{break}; go to line 2
\STATE return $C$
\end{algorithmic}
\end{algorithm}
\vspace*{-0.1in}

\subsection{Shortest Solution while Minimizing Error} 
A more difficult situation is to find the shortest solution $C$ such that $C\xrightarrow{\epsilon} Y$ and $|C|\leq D$ {\em while minimizing} $e(C\rightarrow Y)$ for a given $\epsilon$ (i.e. minimizing $e$ first, then the solution size). A naive approach is to run
Algorithm 3 on a grid of $e$ values, $e<\epsilon$, and return the solution with the minimal possible $e$. If multiple solutions exist with the minimal value of $e$, return the shortest one.

Algorithm 4 provides an alternative for finding the minimal $e$ using hill climbing, avoiding the grid search with at most $F$ consecutive failures. In the first iteration,
it finds a candidate solution $S$ by running Algorithm 2 
(line 3-4). Next, it attempts to reduce the error rate $e$ using sequential column additions to $S$ with the max size $D$, i.e., $ColAdd(S, D)$ (line 5,6). This is due to the fact that, from Lemma \ref{lem:cadd}, the error rate can be reduced with column additions. We assign this updated column set as our initial solution $C$ along with its error rate $e$. 
In the following iterations, we perform a similar updating process (line 3, 4, 7-10) in the attempt to achieve an even smaller error rate until we fail consecutively $F$ times. At this point, we have found the smallest reachable $e$. Finally, we find the shortest solution using Algorithm 3 with the given minimum value of $e$ (line 11, 12). 

\begin{algorithm}[htb]
\caption{Find $Shortest_2(Y, \epsilon, D)$, the shortest solution $C$, among all solutions $S$ such that 
$S\xrightarrow{\epsilon} Y$  for  a column combination $Y$, with minimum $e(S\rightarrow Y)$, and $|S|\leq D$.}
\label{alg:smallest-y-epsilon}
\textbf{Input}: Maximum number of  consecutive failures $F$ (e.g. 10)
\begin{algorithmic}[1]
\STATE Initialize 
$e\leftarrow \epsilon, k \leftarrow 0, C \leftarrow \emptyset$, $new\leftarrow 0$
\STATE \textbf{while} {$k < F$ and $e>0$} 
    \STATE \hspace*{0.15in} $L_0\leftarrow$ random permutation of $(1,2,\ldots, N)$
    \STATE \hspace*{0.15in} $S=FA(Y,e,\widetilde{D})$ using Alg. 2 with input $L_0$
    \STATE \hspace{0.15in} \textbf {if} 
	$(C=\emptyset$ and $S\neq \emptyset)$ \textbf{then}
	\STATE \hspace{0.3in} $C\leftarrow ColAdd(S;D)$, $e\leftarrow e(C\rightarrow Y)$;
	$new \leftarrow 1$; $next$ 
	\STATE \hspace{0.15in}  \textbf {if} 
      ($C\neq \emptyset$ and $S\neq \emptyset$) 
	\textbf{then}
     	\STATE \hspace{0.25in} { $S\leftarrow ColAdd(S;D) $}; $e' \leftarrow e(S\rightarrow Y)$ 
     	\STATE \hspace{0.3in} \textbf {if} $e'<e$ \textbf{then} {$C \leftarrow S; e\leftarrow e'; new\leftarrow 1$}
	\STATE \hspace{0.15in} \textbf {if} $new=1$ \textbf{then} ($k\leftarrow 0; new\leftarrow 0$)
	\STATE \hspace*{0.15in}
	\textbf{else} $k=k+1$
	\STATE \textbf{if} $|C|>0$, run Algorithm 3 to update $C$ by $Shortest_1(Y,e,D)$ with $C$ as the initial solution
\STATE return $C$
\end{algorithmic}
\end{algorithm}

\cm{
\subsubsection{Initialize $C_0$}
We start with a random permutation of the columns first, and proceed to use FA algorithm to find a single solution $C$. Let $C_k$ be the $k$th element in $C$, then we first check whether $C_k$ is essential or not defined as follows.
\begin{definition}
A column $c$ is essential in the minimal function dependency solution sets for a column combination $Y$, iff all solution sets contains $C$.
\end{definition}
\label{def:essential}
A necessary and sufficient condition for a single column $c$ to be sufficient is no solution set can be found after removal of column $c$.

Once we obtain a single solution $C$, we proceed to check whether any of the columns in $C$ are essential according to Definition~\ref{def:essential}. Let $C_0$ be any columns that are found to be essential, we shall start FA algorithm (Algorithm 1 and 2) with the initialization $C=C_0$ instead of empty set (Step 1). Notice that we can actually perform the check before obtaining the 1st solution, but this tends to be more computational expensive as we need to go through all the columns not just those in $C$.

After we obtain one solution set $C$, we test whether each of these are essential. Let $C_0$ be the set of all essential columns in $C$, then we need to always start searching for new solutions with $C_0$ included. If all columns in $C_0$ are essential, then we are done.

Re-initialization of $C_0$ can be performed either after we obtain an initial solution set $C$ (Step 6 in Algorithm 3 and 4), or from the beginning.  }


\subsection{Discussion}
In this section, we proposed algorithms to find the best solution of a column combination $C$ that functionally determines $Y$ for given size and error constraints, using multiple iterations of FA algorithm with random column permutations. 
Our algorithms avoid the exhaustive search of all solutions and utilize the fact that FA algorithm favors a short solution (Theorem 2). 
We want to point out that 
if the goal is to find all solutions of $C$ that functionally determines $Y$, then 
running many iterations of FA algorithm with independent column permutations is not efficient. This is because the number of all permutation is $N!$ for a table with 
$N$ columns, a number  much larger than the number of all subsets, $2^N$ (the maximal number of solutions of $C$). That said, we believe more efficient algorithms can be developed for finding multiple solutions of $C$ by 
adaptively choosing
column permutations $L_0$ based on previous found solutions.
For example, if $C=\{1,2,3\}$ is a first solution found by one run of FA, then in the next iteration, we
can place $\{1,2,3\}$ last in $L_0$ and randomly reorder the remaining columns.  
This way, we are guaranteed to find a different solution if it exists
(since the indices of the 2nd solution will appear earlier).
We leave this to future work.

\cm{
We discuss a computational speed-up in the implementation.
In both Algorithms 3 and 4, $C_0$ (in line 3) is initialized as $\emptyset$ by default. 
We can actually initialize $C_0$
by finding columns that are essential in 
the functional dependency, i.e., the dependency breaks after the removal of any of these columns.
We can perform this initialization right from the start, or reinitialize $C_0$ right after we obtain the first solution (with initial $C_0=\emptyset$) by checking only those columns in the first solution set. 
The latter is often more computationally advantageous. 
}


\begin{table*}[htpb]
\caption{Sample Rows from a Simulated Table of Customer Purchase Orders}
\label{tbl:order}
\vspace{-3mm}
{\fontsize{0.27cm}{0.35cm}\selectfont 
\centering
\begin{tabular}{c|c|c|c|c|c|c|c|c|c|c|c}
 \hline\hline
    orderID & productID & customerID & time & ordertype & fullname & phoneno  & age  & ptype & price & weight & shippingcost \\ \hline 
    1 & 1 & 4 & day 1 & web & Alex Smith & 732-906-9882 & 29 & book & 15 & 0.5 & 4.5 \\
    1 & 6 & 4 & day 1 & web & Alex Smith & 732-906-9882 & 29 & music & 11 & NA & NA \\
    2 & 3 & 2 & day 1 & web & Emma Miller & 908-654-3213 & 36 & beauty & 39& 0.2 & 4.2\\
    3 & 2 & 2 & day 2 & web &  Emma Miller & 908-654-3213 & 36 & clothing & 32 & 0.6 & 4.6\\
    3 & 4 & 2 & day 2 & web & Emma Miller & 908-654-3213 & 36 & games & 18 & NA & NA\\
    3 & 6 & 2 & day 2 & web & Emma Miller & 908-654-3213 & 36 & music & 11 & NA & NA\\ 
    4 &  5 & 3 & day 2 & phone & Kim Dole & 
    973-211-1245 & 45 & grocery & 25& 1 & 2.3\\
    4 &  5 & 3 & day 2 & phone & Kim Dole & 
    973-211-1245 & 45 & grocery & 12 & 1 & 2.3\\
\hline\hline
\end{tabular}
\par}
\vspace{-2mm}
\end{table*}

\cm{
\begin{figure*}[htpb]
\centering
\includegraphics[width=7in]{figures/paperOrderfull}
\caption{
Sample rows from a  table on customers purchase orders on a e-commerce website.
}
\label{tab:example}
\end{figure*}
}

\section{Building a Schema Tree}
In this section, we present an algorithm that builds a schema tree for a data table ${\mathcal T}$ using {\em recursive}
 functional dependency discovery.
The schema tree 
represents the structure dependency between table columns in a simple hierarchical form.
As a result, it is possible to decompose a big data table into
layers of smaller tables. In the next section, we show how this information is utilized 
for automated feature engineering.

\begin{algorithm}[htb]
\caption{Build Schema Skeleton for Table $\mathcal{T}$}
\label{alg:schema}
\begin{flushleft}
\textbf{Input}: size parameter $D$, error parameter $\epsilon$\\
\textbf{Output}: Schema tree skeleton $V$
\begin{algorithmic}[1]
\STATE Define $RecusiveSplit(Y) = \{$
\STATE \hspace*{0.15in} \textbf{if} $Y=\emptyset$ \textbf{then} return $V$
\STATE \hspace*{0.15in} $E \leftarrow Descendant(Y)$. 
\STATE \hspace*{0.15in} $S\leftarrow shortest(Y, \epsilon,D)$,  $S\subset E$, using Algorithm 3 or 4.
\STATE \hspace*{0.15in} \textbf{if} $S=\emptyset$ \textbf{then} return $V$
\STATE \hspace*{0.18in}\textbf{else} grow $V$ by adding child nodes $S$ under node $Y$
\STATE \hspace*{0.4in} \textbf{for each} node $W \subset S$, 
$RecursiveSplit(W)$ 
\STATE  \}
\STATE Add an index (rowid) column $Y$. Set $Y$ as the root node of $V$.
\STATE $RecusiveSplit(Y)$
\end{algorithmic}
\end{flushleft}
\end{algorithm}

Our schema tree is built via a two-step process. In the first step, we build the tree skeleton using Algorithm 5.
Given a table ${\mathcal T}$, we first add a hypothetical {\em rowid} column as the row index  and start building the tree using this as the root node (line 9). Let $Y$ be the current node of interest. Then we attempt to split $Y$ using columns from the set $E=Descendant(Y)$ (Definition 3 and line 3). A split is found if 
there exists a solution $S \subset E$ such that $S\xrightarrow{\epsilon} Y$ and $|S|\leq D$ using Algorithm 3 or 4 (depending on the application context), for the given size constraint $D$ and error constraint $\epsilon$ (line 4).
\cm{
by first finding all its descendant, where a descendant is defined as a column that is functionally dependent on $Y$ (line 1 that defines $Descendant(Y)$). Then we find a column combination $S$ from $Descendant(Y)$ such that is $Y$ has a minimal-functional dependency, i.e., $S\xrightarrow[minimal]{e} Y$, with size constraint $D$ and error constraint $\epsilon$ (line 4 and 5 in the definition of $RecurisveSplit(Y)$). 
}
If a solution $S$ is found, then we split
$Y$ by nodes in $S$ as its children. 
Such a splitting process is done recursively for all nodes in $S$ (line 6,7) until we are no longer able to split (line 5). Note when $\epsilon=0$, such a split indicates a bi-directional functional dependency (or equivalence) between $Y$ and $S$ since $S$ is derived from $Descendant(Y)$. It is important to note that for well designed data tables, the skeleton nodes are usually not decimal valued numeric columns. Therefore we do not include these columns when building the tree skeleton.

Once the schema skeleton is built, the second step is to add the remaining columns to complete the tree construction. For each of the remaining nodes, we attach it to the deepest skeleton node that it functionally depends. As explained in Section 2, this can be done by a simple row distinct count check.  In the following, we shall elaborate our process using simulated and real data.


\subsection{Illustrative Examples} \label{sec:example}
Table \ref{tbl:order} displays sample rows of a simulated table that stores information related to customer purchase orders on an e-commerce website: an order is placed by a customer at a specific time, and each order can contain multiple products spanning multiple rows. 
The table also contains the associated customer and product information. Here proper column names are shown with the contextual information so that a domain expert can easily understand the structure relations between columns. Understanding column relations will be much more difficult if this contextual information is removed.
\begin{figure}[htpb]
\centering
\includegraphics[width=3in]{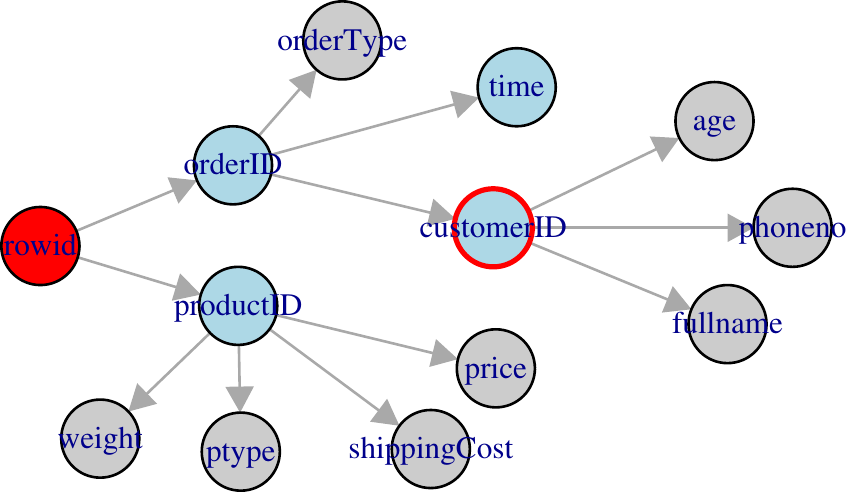}
\caption{Schema tree of Order Table 1.}
\label{fig:orderSchema}
\end{figure}

Figure~\ref{fig:orderSchema} shows the discovered schema tree with $\epsilon=0$ and $D=3$,
where the red node is the root node representing the row index, blue nodes indicate
the discovered tree skeleton using Algorithm 5, and the gray nodes are the rest of the leaf nodes attached in the second step of tree construction. 
Let $\leftrightarrow$ indicate
the {\em joint} bidirectional functional dependency, and  as before, let 
$\rightarrow$ indicate functional dependency.  Figure~\ref{fig:orderSchema} can be interpreted as follows: 
\begin{eqnarray*}
rowID & \leftrightarrow & (orderID,
\ productID) \nonumber\\
orderID & \leftrightarrow & (customerID, \ time) \nonumber\\
orderID & \rightarrow & orderType; \\ customerID & \rightarrow & phoneno; \ \ 
productID \ \rightarrow \ price 
\nonumber
\end{eqnarray*}
where the first two rows are regarding to the blue nodes, and the next two are sample relations concerning the gray nodes. 
The tree skeleton (red and blue nodes) is built using Algorithm 5 while removing {\em weight} and {\em shippingCost} as these are decimal valued columns and not initially included (decimal valued numeric columns can be auto-detected by methods in \cite{huseyin.sdd}). In the second step, we attach each of the remaining columns (gray nodes) to the deepest blue (or red) node that it functionally depends.
With this schema tree, we can actually extract several smaller tables:
a 'Product' table from {\em productID} and its children, a 'Customer' table from {\em CusomterID} and its children, and a smaller 'Order' table from {\em orderID} and its children. Our big table is, in fact, the merge of these three tables.

The second dataset contains broadband home router data records of customers from a  network carrier during a 30-day period.  It consists of $27$ columns and $238330$ rows, where columns are device' ID, type, associated network nodes and types, the customer information, and time series of several KPIs. 
\begin{figure*}[htpb]
\centering
\includegraphics[width=5in]{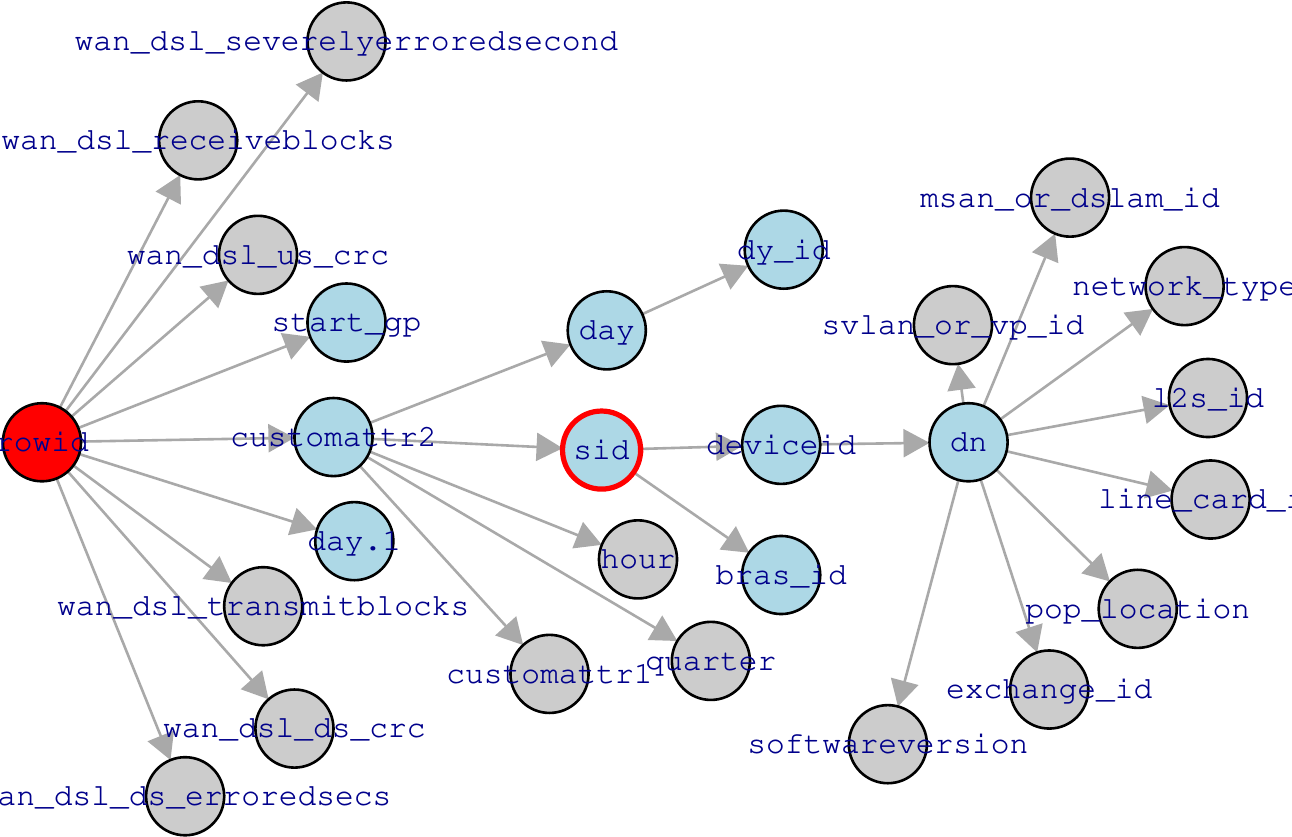}
\caption{Schema Tree of Home Router Device Data}
\label{fig:BTschema}
\end{figure*}

 Figure~\ref{fig:BTschema} shows the discovered schema tree with $\epsilon=0$ and $D=3$. The first split from the red root node representing row index is $rowid \leftrightarrow (start\_gp, $ $ customerattr2,day.1)$, and the next split is $customerattr2 \leftrightarrow (day,$ $sid)$. Furthermore, $day\leftrightarrow day\_{id}$, 
$sid \leftrightarrow (deviceid, bras\_{id})$,
$dn\leftrightarrow deviceid \leftrightarrow dn$. The leftmost 6 gray nodes are associated KPI time series
(it is time related by noticing
the {\em day.1} of the 1st split from the root), 
and the rightmost gray nodes are device-related attributes. Again, with the schema tree, we can decompose the table into layers of 'Device' table from node {\em deviceid}, 
a 'customerattr2' table from node {\em customerattr2}, and a KPI time series table for each {\em customerattr2/ start\_{gp}} combination. Understanding the column relations is much more straightforward with our diagram.

\begin{figure*}[htpb]
\centering
\includegraphics[width=3.2in]{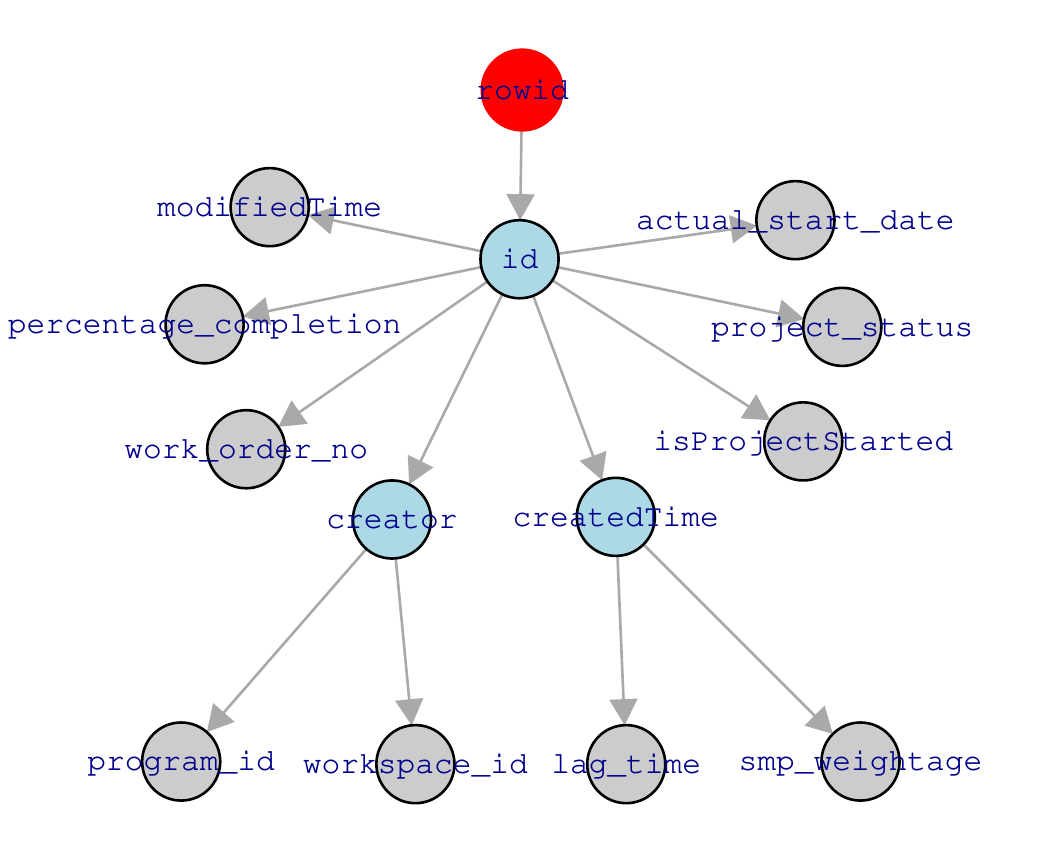}
\includegraphics[width=3.2in]{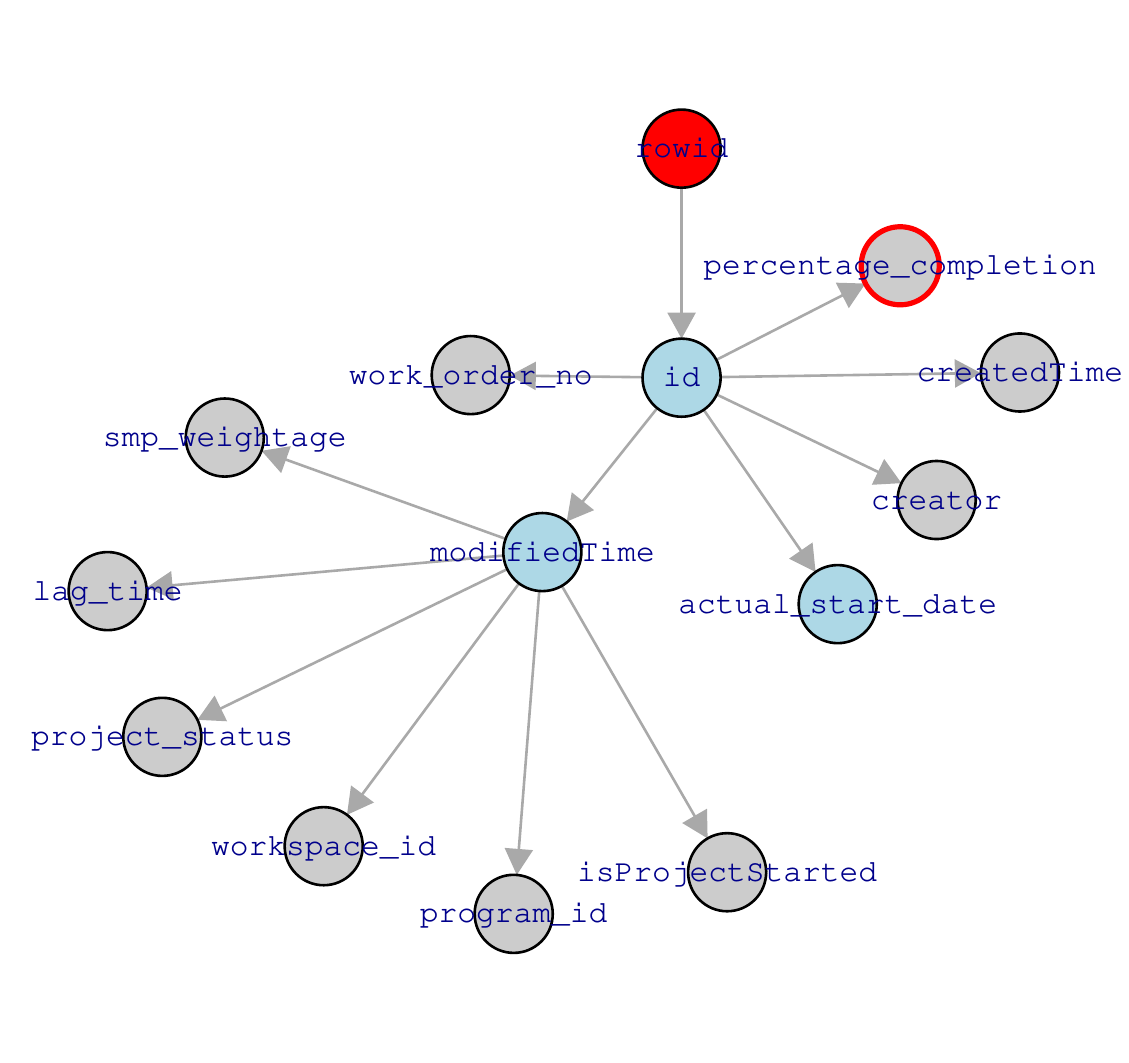}
\caption{Two Realizations of Schema Tree for Project Data Table}
\label{fig:SMP}
\end{figure*}

Finally, we want to comment that the schema tree derived from our algorithm may not be unique, especially for tables with complex column relationship. For example, if there are multiple short functional dependencies with the same length, the algorithm will randomly pick one of these to follow through. Figure~\ref{fig:SMP} shows two different realizations of our algorithm for an internal data table named 'Project' describing the progress of ongoing projects. From the root node $rowid$, both scheme trees find a single column $id$ as the 1st level descendant. However, the trees differ in the next blue node split: the tree on the left is split  by $(creator, createdTime)$, and the tree on the right is split  by ($modifiedTime, actual\_start\_date$). Although both solutions are valid, one may prefer the first scheme tree. This is because $(creator, createdTime)$ is the combination of a label and DateTime column, but 
$(modifiedTime, actual\_start\_date)$ are two DateTime columns, and the formal is a more natural table key than the latter.  In this case, we are using the contextual information in the columns to select a more appropriate solution. How to  automatically infer the contextual information of columns and utilize this in the selection of schema tree is part of our future work.

\section{Applications to Auto Feature Engineering}

Our schema tree built automatically from a data table not only provides valuable information for the data scientists to better understand the table, it can also directly benefit the downstream machine learning tasks. In this section, we show how 
information from the schema tree can be used for {\em automated feature engineering}. 
Traditionally, feature engineering often requires handcrafting  based on contextual understanding. 

We shall illustrate this using the simulated order Table 1.
Suppose 'CustomerID' is an anchor node 'customer' and we are interested in engineering features for each customer. 
First, we notice that  there are two types of relationship from {\em CustomerID} to any other node: \textbf{one-to-one} and \textbf{one-to-many}. In the {one-to-one} relation, each {\em CustomerID} determines a single instance of the other variable, which in fact implies that these variables are either functional equivalent to or descendants of {\em CustomerID} in the schema tree. The remaining  nodes in the schema tree have a  
{one-to-many} relation from {\em CustomerID} which imply that each {\em CustomerID} corresponds to multiple instances of the variable.

In this example, {\em CustomerID} has three gray nodes as descendants: {\em age} , {\em fullname}, {\em phoneno}. 
This implies that for each of the three nodes, there is a \textbf{one-to-one} relation from 
{\em CustomerID}, so we can take the instance value {\em directly} as a feature for each {\em CustomerID}. On the other hand, for each of the remaining columns, there is a \textbf{one-to-many} relation as
each distinct {\em CustomerID} will have one or several instances of its values. 
(Take the node {\em Price} as an example. In Table 1, Customer 4 have 2 price instances, Customer 2 have 4 price instances.)
Therefore we need to aggregate these values to create a uniform number of  features for each {\em CustomerID}. 
In this case, for any node ${\mathcal N}$, we first find the shortest path from {\em CustomerID} to ${\mathcal N}$ in the schema tree, which gives us a bottom-to-top aggregation path towards {\em CustomerID}. 
For example, the path from {\em Price} to {\em CustomerID} while ommiting {\em rowid} as it is a hypothetical node is:
$
Price \rightarrow ProductID \rightarrow OrderID \rightarrow  CustomerID. 
$
We further omit the {\em one-to-one} relations in the path as no aggregation is necessary, 
and so the aggregation path is 
simplified to 
$
Price \Rightarrow OrderID \Rightarrow CustomerID, 
$
where $\Rightarrow$ indicates a many-to-one relation.
Finally, we extract all the subpaths and use a set of pre-specified aggregation functions to generate features. Specifically, {\em Price} is aggregated by
\cm{
\begin{eqnarray} 
Price & \xRightarrow{f_1} & OrderID \ \xRightarrow{f_2} \  CustomerID, \\
Price & \xRightarrow{f_3} &  CustomerID
\end{eqnarray}
}
\begin{equation} 
\resizebox{.98\hsize}{!}{$Price  \xRightarrow{f_1}  OrderID \xRightarrow{f_2}   CustomerID, \ \ \ 
Price  \xRightarrow{f_3}   CustomerID$}
\nonumber
\end{equation}
where $f_1, f_2, f_3$ are pre-specified aggregation functions depending on the characteristics of the variable to be aggregated. For numeric variables, these functions can be 
max, mean, standard deviation, interval probabilities or quantiles.
For categorical variables, it can be distinct counts. This way we can come up with a set of features such as: the maximum ($f_2$) of average ($f_1$) price per {\em orderID}, or the standard deviation ($f_3$) of the price. This approach is related to recent
research on automated feature engineering \cite{DFS-MIT,OneBM}. 
However, these methods
focuses on  relational databases with a known schema with an assumption that each table has a shallow structure. In contrast, our method proposed here do not assume prior schema knowledge and can work with data tables with layered structure. In addition, our methods can also be extended relational databases by joining schema graphs together (not illustrated here). 

\section{Conclusion}
In this paper, we developed an
automatically-generated, data-driven schema tree to represent the structural relations between columns of a data table that can greatly facilitate a data scientist when exploring a new dataset. The key to our approach is the recursive extraction of important {\em functional dependency} between columns of the data table,  where we proposed a forward addition algorithm that requires much less computation compared to existing approaches.



\appendix

%
\bibliographystyle{IEEEtran}
\bibliography{main}

%

\end{document}